\documentclass[10pt]{article}
\usepackage[utf8]{inputenc}
\usepackage{mathrsfs}  
\usepackage{titlesec}

\titleformat*{\section}{\large\bfseries}
\usepackage{setspace}
\usepackage{amssymb,amsmath,amsthm,amsfonts,amstext}
\usepackage[english]{babel}
\usepackage[left=1cm,right=1cm,top=1cm,bottom=1.5cm]{geometry}
\usepackage{enumerate}
\usepackage[mathscr]{euscript}

\usepackage{mathtools,braket}
\usepackage{bm, hyperref}
\usepackage{xcolor, float}
\usepackage[capitalize]{cleveref}

\newtheorem{theorem}{Theorem}

\newtheorem{proposition}[theorem]{Proposition}
\newtheorem{lemma}[theorem]{Lemma}

\newtheorem{claim}[theorem]{Claim}

\newtheorem{definition}{Definition}

\newcommand{\del}{\ensuremath{\delta}}
\newcommand{\din}{\ensuremath{\delta^{\mathrm{in}}}}

\newcommand{\opt}{\textsc{OPT}}
\newcommand{\safe}{\mathscr{S}}
\newcommand{\unsafe}{\mathscr{U}}

\newenvironment{proofof}[1]{\begin{proof}[{Proof of #1}]}{\end{proof}}

\onehalfspacing

\title{\Large \textbf{A \boldmath$2$-Approximation Algorithm for Flexible Graph Connectivity}}
\author{\large
Sylvia Boyd\thanks{{\tt sboyd@uottawa.ca}. School of Electrical Engineering and Computer Science, University of Ottawa.}
\and
Joseph Cheriyan\thanks{ {\tt \{jcheriyan,sharat.ibrahimpur\}@uwaterloo.ca}. Department of Combinatorics and Optimization, University of Waterloo.}
\and
Arash Haddadan\thanks{{\tt ahaddada@virginia.edu}. Biocomplexity Institute and Initiative, University of Virginia.}
\and 
\addtocounter{footnote}{-2}
Sharat Ibrahimpur\footnotemark
}
\date{}

\begin{document}
\maketitle
\vspace{-15pt}
\abstract{We present a 2-approximation algorithm for the Flexible Graph Connectivity problem \cite{AHM20} via a reduction to the minimum cost $r$-out $2$-arborescence problem.}

%----------
\vspace{-5pt}
\section{Introduction}
In this paper, we consider the Flexible Graph Connectivity (FGC) problem which was introduced by Adjiashvili, Hommelsheim and M\"uhlenthaler \cite{AHM20}.
In an instance of FGC, we have an undirected connected graph $G = (V,E)$, a partition of $E$ into unsafe edges $\unsafe$ and safe edges $\safe$, and nonnegative costs $\{ c_e \}_{e \in E}$ on the edges.
The graph $G$ may have multiedges, but no self-loops.
A subset $F \subseteq E$ of edges is feasible for FGC if for any unsafe edge $e \in F \cap \unsafe$, the subgraph $(V,F \setminus \{e\})$ is connected.
We seek a (feasible) solution $F$ minimizing $c(F) = \sum_{e \in F} c_e$.
The motivation for studying FGC is two-fold.
First, FGC generalizes many well-studied survivable network design problems.
Most notably, the minimum-cost $2$-edge connected spanning subgraph (2ECSS) problem corresponds to an instance of FGC where all edges are unsafe. %and the weighted tree augmentation problem (WTAP) corresponds to the FGC-setting where we designate tree-edges as safe with cost zero and we designate the links as unsafe with cost equal to their weight in the WTAP-instance. 
Second, FGC captures a non-uniform model of survivable network design problems where a subset of edges never fail, i.e., they are always safe.
Adjiashvili et al.~\cite{AHM20} gave a $2.523$-approximation algorithm for FGC.
Our main contribution is a simple $2$-approximation algorithm for FGC.
At a high level, our result is based on a straightforward extension of the $2$-approximation algorithm of Khuller and Vishkin \cite{khuller-vishkin} for 2ECSS.

\begin{theorem} \label{fgc-2apx}
There is a $2$-approximation algorithm for FGC.
\end{theorem}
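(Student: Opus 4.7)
The plan is to extend Khuller and Vishkin's $2$-approximation for 2ECSS, which builds the digraph obtained by bidirecting every edge and finds a minimum-cost $r$-out $2$-arborescence there. The key observation that makes the extension work is that $F\subseteq E$ is FGC-feasible if and only if every non-trivial $S\subsetneq V$ satisfies
\[
|F\cap\del(S)\cap\unsafe| \;+\; 2\,|F\cap\del(S)\cap\safe| \;\geq\; 2,
\]
i.e.\ with respect to failure resilience a safe edge in a cut is worth two unsafe edges. This reframes FGC as a weighted cut-covering problem that is the natural analogue of the $2$-edge-connectivity cut condition used by Khuller--Vishkin.

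Given this reformulation, I would construct a bidirected digraph $D=(V,A)$ from $G$: each $e=uv\in E$ contributes two anti-parallel arcs $(u,v)$ and $(v,u)$ of cost $c_e$, and each arc inherits the safety status of $e$, with safe arcs carrying capacity $2$ and unsafe arcs carrying capacity $1$. After fixing a root $r\in V$, compute a minimum-cost arc set $A'\subseteq A$ such that the total capacity of arcs of $A'$ entering every $S\not\ni r$ is at least $2$. This capacitated analogue of minimum-cost $r$-out $2$-arborescence is polynomial-time solvable: splitting each capacity-$2$ safe arc into two parallel unit-capacity arcs produces a multi-digraph on which Edmonds' arborescence-packing theorem (or weighted matroid intersection) applies. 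The algorithm returns $F:=\{e\in E : A' \text{ contains at least one arc of } e\}$.

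The approximation guarantee will follow from three bounds. First, $c(F)\le c(A')$, since each $e\in F$ is represented in $A'$ by at least one cost-$c_e$ arc. Second, $c(A')\le 2\,\opt$: bi-orienting an FGC optimum $F^*$ gives an arc set of cost $2\,c(F^*)=2\,\opt$ that meets the capacitated cut requirement, because for every $S\not\ni r$ its incoming capacity is exactly $|F^*\cap\del(S)\cap\unsafe|+2\,|F^*\cap\del(S)\cap\safe|\ge 2$ by FGC feasibility of $F^*$. Third, $F$ is FGC-feasible: for any non-trivial cut $S$, choose the root on the side not containing $S$ (taking complements as needed); then $F$'s weighted cut value upper-bounds the incoming capacity of $A'$, so it must also be $\ge 2$.

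The main obstacle I would expect to navigate is clean cost bookkeeping for the safe edges. A tempting alternative would split each safe arc into two parallel half-cost copies and invoke standard (uncapacitated) minimum-cost $r$-out $2$-arborescence directly, but such a reduction can produce an $A'$ that uses only one of the two copies and pays $c_e/2$ while $F$ still pays $c_e$, destroying the approximation factor. Keeping the safe arc single and working in the capacitated formulation (charging $c_e$ once and counting $2$ units of capacity) sidesteps this pitfall, and the remaining technical point is to invoke the appropriate generalization of Edmonds' arborescence packing algorithm to solve the capacitated problem in polynomial time.
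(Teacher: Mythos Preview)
Your overall plan---bidirect, solve an arborescence-type subproblem, map back---matches the paper's, and your feasibility and cost arguments for $F$ and for the bidirection of an optimum $F^{*}$ are correct. The gap is the middle step: you need to solve, in polynomial time, the capacitated selection problem of finding a minimum-cost $A'\subseteq A$ with $\sum_{a\in\din_{A'}(S)}\mathrm{cap}_a\ge 2$ for every nonempty $S\subseteq V\setminus\{r\}$, where safe arcs carry capacity~$2$. You assert this is tractable but give no algorithm. ``Splitting each capacity-$2$ safe arc into two parallel unit-capacity arcs'' does \emph{not} solve this problem: with both copies at cost $c_e$ the split optimum can strictly exceed the capacitated optimum, and any scheme that makes one copy free lets the $2$-arborescence take only the free copy. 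The natural LP is not integral either (on $V=\{r,a,b\}$ with zero-cost unsafe edges $ra,rb$ and a unit-cost safe edge $ab$, the integer optimum is $2$ while the LP optimum is $1$), so the hand-wave to Edmonds/matroid intersection does not go through. More tellingly, your argument extends verbatim to $k$-FGC (give safe arcs capacity $k{+}1$ and require $k{+}1$ on every in-cut) and would yield a $2$-approximation for every $k$, far stronger than the paper's $(k{+}1)$-approximation; so either the capacitated subproblem is hard, or you are implicitly claiming a major improvement without proof.

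The paper avoids the capacitated problem entirely. It replaces each safe arc by two \emph{full-cost} parallel copies and solves the standard minimum-cost $r$-out $2$-arborescence problem, which is polynomial. The ingredient you are missing is the simple observation (Claim~\ref{lim-freq}) that an $r$-out $2$-arborescence, being a disjoint union of two acyclic arborescences, contains at most two arcs between any pair of vertices. Hence, although a safe edge contributes four arcs of cost $c_e$ each, the optimal $2$-arborescence $T$ uses at most two of them, giving $c(T)\le 2\,c(F^{*}\cap\unsafe)+2\,c(F^{*}\cap\safe)=2\,\opt$; then $c(F)\le c(T)$ exactly as you argue. You were right to reject half-cost splitting, but the correct fix is full-cost splitting combined with this degree bound, not a capacitated formulation.
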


Adjiashvili et al. \cite{AHM20} also consider the following  generalization of FGC.
Let $k \geq 1$ be an integer.
A subset $F \subseteq E$ of edges is feasible for the $k$-FGC problem if for any edge-set $X \subseteq F \cap \unsafe$ with $|X| \leq k$, the subgraph $(V,F \setminus X)$ is connected.
The goal in $k$-FGC is to find a solution of minimum cost.
The usual FGC corresponds to $1$-FGC.
The following result generalizes Theorem~\ref{fgc-2apx}.

%The FGC model is "non-uniform" and is motivated by growing interest in robust optimization that arose many years after the work of KV (in the
%words of AHM), it turns out that an extension of KV methods "captures" FGC
%up to an approx factor of 2

\begin{theorem} \label{kfgc-apx}
There is a $(k+1)$-approximation algorithm for $k$-FGC.
\end{theorem}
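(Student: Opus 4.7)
The plan is to generalize the reduction used for Theorem~\ref{fgc-2apx}, replacing the $r$-out $2$-arborescence by an $r$-out $(k+1)$-arborescence and handling safe edges through multiplicity. Given $(G, c, \safe, \unsafe)$, I would construct a directed multigraph $D = (V, A)$ in which each unsafe edge $\{u,v\}$ contributes the two anti-parallel arcs $(u,v), (v,u)$, each of cost $c_e$, and each safe edge $\{u,v\}$ contributes $k+1$ parallel copies of $(u,v)$ and $k+1$ parallel copies of $(v,u)$, each of cost $c_e$. After fixing an arbitrary root $r \in V$, I would compute a minimum-cost $r$-out $(k+1)$-arborescence $A^* \subseteq A$ using Edmonds' branching algorithm, and output the set $F \subseteq E$ of undirected edges having at least one arc in $A^*$.

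To verify feasibility, I would take any cut $(S, V \setminus S)$ with $r \in S$ and use $|\dout_{A^*}(S)| \ge k+1$. By construction each safe edge in $F \cap \del(S)$ contributes at most $k+1$ outgoing arcs to $A^*$ and each unsafe edge at most one, so the cut must contain either a safe edge or at least $k+1$ unsafe edges of $F$---precisely the $k$-FGC cut condition.

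For the cost analysis I would chain $c(F) \le c(A^*) \le (k+1)\,\opt$. The first inequality is immediate: each $e \in F$ is paid once in $c(F)$ and contributes at least one arc of cost $c_e$ to $A^*$. For the second, take an optimal $k$-FGC solution $F^*$ and let $D^* \subseteq D$ be the subdigraph induced by arcs whose underlying edge lies in $F^*$. The feasibility of $F^*$ implies every $r$-cut in $D^*$ has at least $k+1$ outgoing arcs (a safe edge in the cut alone contributes $k+1$ parallel out-arcs; otherwise at least $k+1$ unsafe edges each contribute one), so by Edmonds' branching theorem $D^*$ admits $k+1$ arc-disjoint $r$-arborescences.

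The key step, and the main potential obstacle, is bounding the cost of these $k+1$ arborescences by $(k+1)\,c(F^*)$. The crucial observation is that a single $r$-out arborescence is an orientation of a spanning tree of $G$, so it uses at most one arc per undirected edge. Hence the union of $k+1$ arc-disjoint $r$-arborescences uses at most $k+1$ arcs per undirected edge of $F^*$, each arc of cost $c_e$, giving total arc cost at most $(k+1) \sum_{e \in F^*} c_e = (k+1)\,c(F^*)$. Combining the two inequalities yields $c(F) \le (k+1)\,\opt$, and the special case $k=1$ recovers Theorem~\ref{fgc-2apx}.
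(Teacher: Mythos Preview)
Your proposal is correct and follows essentially the same approach as the paper: the digraph construction, the use of a minimum-cost $r$-out $(k+1)$-arborescence, the feasibility check via the cut condition, and the cost bound all match. Your ``key observation'' that each $r$-out arborescence uses at most one arc per undirected edge is precisely the content of the paper's Claim~\ref{lim-freq}, and the paper additionally notes the slightly sharper bound $c(T) \le 2c(F^* \cap \unsafe) + (k+1)c(F^* \cap \safe)$, though this is not needed for the final $(k+1)$-approximation.
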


Our proof of Theorem~\ref{kfgc-apx} is based on a reduction from $k$-FGC to the minimum-cost $(k+1)$-arborescence problem (see \cite{schrijver-book}, Chapters~52 and~53).
We lose a factor of $k+1$ in this reduction.
Fix some $k$-FGC solution $F$ and designate a vertex $r \in V$ as the root vertex.
For an edge $e = uv$, we call the arc-set $\{(u,v),(v,u)\}$ as a bidirected pair arising from $e$.
The key idea in our proof is that there exists an arc-set $T$ that contains $k+1$ arc-disjoint $r \rightarrow v$ dipaths for each $v \in V \setminus \{r\}$ while satisfying the following two conditions: (i) for an unsafe edge $e = uv \in F$, $T$ uses at most $2$ arcs from a bidirected pair arising from $e$; and (ii) for a safe edge $e = uv \in F$, $T$ uses at most $k+1$ arcs from the disjoint union of $k+1$ bidirected pairs arising from $e$. 
This argument is formalized in Lemma~\ref{fgc-to-arb}.
Complementing this step, we show that any arc-set $T$ (consisting of appropriate orientations of edges in $E$) that contains $k+1$ arc-disjoint $r \rightarrow v$ dipaths for every $v \in V \setminus \{r\}$ can be mapped to a $k$-FGC solution.
%jc We remark that such a $T$ can be easily constructed in the $1$-FGC setting.
%jc Let $B \subseteq F$ denote the set of bridges in $(V,F)$ for a $1$-FGC solution $F$; 
%jc note that edges in $B$ are all safe.
%jc For each $e = uv \in B$, we include two copies of the arc $(u,v)$ in $T$, where $r$ and $u$ belong to the same connected component in $(V,F \setminus \{e\})$, and for each $e \in F \setminus B$, we include in $T$ a bidirected pair arising from $e$.

%For each edge $e = uv \in \safe$, we include two pairs of bidirected arcs  $(u,v)$ and $(v,u)$ in $A$, and 
%for any $e = uv \in \unsafe$ we include a single pair of bidirected arcs $(u,v)$ and $(v,u)$. 

%----------

\vspace{-12pt}
\section{A \boldmath $(k+1)$-Approximation Algorithm for $k$-FGC}
\vspace{-5pt}
%jc We now give the technical details of our proof.
For a subset of vertices $S$ and a subgraph $H$ of $G$, we use $\del_H(S)$ to denote the set of edges in $H$ that have one endpoint in $S$ and the other in $V \setminus S$. 
The following characterization of $k$-FGC solutions is straightforward.

\begin{proposition} \label{cut-prop}
$F$ is feasible for $k$-FGC $\iff \forall \, \emptyset \subsetneq S \subsetneq V$, $\del_F(S)$ contains a safe edge or $k+1$ unsafe edges.
\end{proposition}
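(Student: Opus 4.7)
The plan is to prove both directions by using the standard fact that an (undirected) graph $H$ on vertex set $V$ is disconnected if and only if there exists $\emptyset \subsetneq S \subsetneq V$ with $\delta_H(S) = \emptyset$. Once we invoke this, the proposition follows almost mechanically by considering cuts.

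For the forward direction, I would argue the contrapositive: assume there exists a proper nonempty $S \subsetneq V$ such that $\delta_F(S)$ contains no safe edge and at most $k$ unsafe edges. Let $X = \delta_F(S)$. Then $X \subseteq F \cap \unsafe$ and $|X| \leq k$, so $X$ is a candidate failure set. Removing $X$ leaves no edge of $F$ crossing $S$, so $(V, F \setminus X)$ is disconnected, witnessing that $F$ is infeasible for $k$-FGC.

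For the backward direction, I would again argue by contrapositive. Suppose $F$ is infeasible, so there exists $X \subseteq F \cap \unsafe$ with $|X| \leq k$ for which $(V, F \setminus X)$ is disconnected. Pick $\emptyset \subsetneq S \subsetneq V$ with $\delta_{F \setminus X}(S) = \emptyset$. Every edge of $\delta_F(S)$ must then lie in $X$, which means $\delta_F(S) \subseteq \unsafe$ (so it contains no safe edge) and $|\delta_F(S)| \leq |X| \leq k$ (so it contains at most $k$ unsafe edges). This violates the cut condition.

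Neither direction seems to present a real obstacle; the only thing to be careful about is making sure the failure set $X$ on each side is indeed a subset of $F \cap \unsafe$ and has size at most $k$, which is immediate from the setup. The proof is essentially a one-line application of Menger/cut-based characterizations of connectivity and is provided mainly to fix notation for the later reduction to the arborescence problem.
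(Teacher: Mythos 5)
Your proof is correct and is precisely the standard argument the paper has in mind: the paper states Proposition~\ref{cut-prop} without proof, calling it straightforward, and your two contrapositive directions (taking $X = \del_F(S)$ as the failure set, and conversely extracting an empty cut $\del_{F \setminus X}(S) = \emptyset$ from a disconnection) supply exactly the intended routine verification. One minor nitpick: the invocation of Menger is unnecessary dressing --- all you use is the elementary fact that a graph is disconnected iff some proper nonempty vertex set has empty cut.
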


%%% Since we are talking about containment in \del_F(S), we can omit "at least" from the proposition  

For the rest of the paper, we assume that the given instance of $k$-FGC is feasible: this can be easily checked by computing a (global) minimum-cut in $G$ where we assign a capacity of $k+1$ to safe edges and a capacity of $1$ to unsafe edges.
%jc The following definitions and results on arborescences will be used in our reduction.
Let $D = (W,A)$ be a digraph and $\{c'_a\}_{a \in A}$ be nonnegative costs on the arcs.
We remark that $D$ may have parallel arcs but it has no self-loops. 
Let $r \in W$ be a designated root vertex. For a subgraph $H$ of $D$ and a set of vertices $S\subseteq{W}$, we use $\din_H(S)$ to denote the set of arcs such that the head of the arc is in $S$ and the tail of the arc is in $W\setminus S$.
 
\begin{definition}[$r$-out arborescence]
An $r$-out arborescence $(W,T)$ is a subgraph of $D$ satisfying: (i) the undirected version of $T$ is acyclic; and (ii) for every $v \in W \setminus \{r\}$, there is an $r \rightarrow v$ dipath in $(W,T)$.
\end{definition}

\begin{definition}[$r$-out $k$-arborescence]
For a positive integer $k$, a subgraph $(W,T)$ is an $r$-out $k$-arborescence if $T$ can be partitioned into $k$ arc-disjoint $r$-out arborescences.
\end{definition}

\begin{theorem}[\cite{schrijver-book}, Chapter~53.8]\label{thm:k-arb-char}
Let $D=(W,A)$ be a digraph and let $k$ be a positive integer. For $r \in W$, the digraph $D$ contains an $r$-out $k$-arborescence if and only if $|\din_D(S)|\geq k$ for every nonempty $S \subseteq V\setminus \{r\}$.
\end{theorem}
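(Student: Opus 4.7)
The plan is to prove this classical (Edmonds') theorem, with the real content lying in the ``if'' direction. I would dispatch the ``only if'' direction directly: if $T$ decomposes into $k$ arc-disjoint $r$-out arborescences $T_1,\ldots,T_k$, then for any nonempty $S \subseteq W \setminus \{r\}$ and any $v \in S$, each $T_i$ contains an $r \to v$ dipath, which must cross $\din_{T_i}(S)$ because $r \notin S$. Hence $|\din_{T_i}(S)| \geq 1$ for each $i$, and summing over $i$ gives $|\din_D(S)| \geq |\din_T(S)| \geq k$.

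For the ``if'' direction, I would induct on $k$. The base case $k = 1$ reduces to a reachability argument: if some $v \in W \setminus \{r\}$ were not reachable from $r$, the set $S$ of all unreachable vertices would be a nonempty subset of $W \setminus \{r\}$ with $|\din_D(S)| = 0$, contradicting the hypothesis; hence every vertex is reachable from $r$ and any spanning out-tree rooted at $r$ (obtained by BFS or DFS) is an $r$-out arborescence. For the inductive step ($k \geq 2$), my plan is to extract a single $r$-out arborescence $T_1 \subseteq A$ such that the residual digraph $D' := (W, A \setminus T_1)$ still satisfies $|\din_{D'}(S)| \geq k-1$ for every nonempty $S \subseteq W \setminus \{r\}$. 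The inductive hypothesis applied to $D'$ then produces $k-1$ arc-disjoint arborescences in $D'$, which together with $T_1$ form the desired $r$-out $k$-arborescence in $D$.

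The main obstacle will be producing such a $T_1$: an arbitrary arborescence may use more than one arc from some ``tight'' cut (a set $S$ with $|\din_D(S)| = k$), which would leave $D'$ deficient at $S$. Following Edmonds, I would handle this via an uncrossing argument leveraging the fact that the family of tight cuts is closed under intersections and unions of crossing members; then I would build $T_1$ arc by arc while maintaining the invariant that every tight cut contributes at most (hence exactly) one arc to $T_1$. A cleaner modern alternative is via matroid intersection, characterizing $T_1$ as a common base of a suitable graphic-type matroid (capturing the arborescence structure, i.e., one in-arc per non-root vertex with no undirected cycle) and a partition matroid (capturing the one-arc-per-tight-cut constraint), whose existence then follows from the matroid intersection theorem.
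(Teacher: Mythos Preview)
The paper does not prove this theorem; it is stated with a citation to Schrijver's book (Chapter~53.8) and then used as a black box in Lemma~\ref{fgc-to-arb} and in the proof of Theorem~\ref{kfgc-apx}. So there is no in-paper proof to compare your proposal against.

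As for the proposal itself: the only-if direction and the $k=1$ base case are fine, and your inductive plan (peel off one arborescence while keeping all in-cuts at least $k-1$) is the standard Lov\'asz-style argument. Two remarks. First, the invariant you must maintain is that \emph{every} nonempty $S\subseteq W\setminus\{r\}$ keeps $|\din_{D'}(S)|\ge k-1$, not just the originally tight ones; a cut that starts with in-degree $k+3$ can still be ruined if the partial arborescence happens to use five of its entering arcs. The uncrossing of tight sets (in the residual sense) via submodularity of $S\mapsto |\din(S)|$ is exactly the tool that lets you extend the partial arborescence one vertex at a time without breaking the invariant, and that part of your sketch is on target. Second, your matroid-intersection alternative does not work as written: the in-arc sets of distinct tight cuts are generally not disjoint, so ``at most one arc per tight cut'' does not define a partition matroid (nor any matroid, in general). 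Matroidal proofs of Edmonds' branching theorem exist, but they go through submodular flows or matroid union rather than the two-matroid setup you describe.
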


\begin{claim} \label{lim-freq}
Let $(W,T)$ be an $r$-out $k$-arborescence for an integer $k \geq 1$.
Let $u,v \in W$ be any two vertices.
Then, the number of arcs in $T$ that have one endpoint at $u$ and the other endpoint at $v$ (counting multiplicities) is $\leq k$.
\end{claim}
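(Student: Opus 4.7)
The plan is to unfold the definition of an $r$-out $k$-arborescence and exploit the acyclicity condition on each piece in the partition. By definition, $T$ partitions into $k$ arc-disjoint $r$-out arborescences $T_1, \ldots, T_k$. Thus, to bound the number of arcs in $T$ with endpoints $\{u, v\}$ by $k$, it suffices to show that each individual $T_i$ contributes at most one such arc.

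I would establish the per-arborescence bound as follows. Fix an index $i \in \{1,\ldots,k\}$. Any two arcs with endpoints $\{u,v\}$ must be one of the following three types: two copies of $(u,v)$, two copies of $(v,u)$, or one of each. In every case, the undirected versions of the two arcs yield two parallel edges between $u$ and $v$, which constitutes a cycle in the underlying multigraph (a length-two cycle formed by a pair of parallel edges). This contradicts condition~(i) in the definition of an $r$-out arborescence, which requires the undirected version of $T_i$ to be acyclic. Hence $T_i$ has at most one arc with endpoints $\{u,v\}$.

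Summing the contributions across the $k$ arborescences in the partition gives the desired bound of $k$ arcs with endpoints $\{u,v\}$ in $T$, counting multiplicities. There is no serious obstacle here: the main subtlety is simply being careful that ``acyclic undirected version'' rules out parallel edges (a length-two multigraph cycle) and not merely cycles of length at least three, so that both the $(u,v),(u,v)$ case and the $(u,v),(v,u)$ case are excluded.
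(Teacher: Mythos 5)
Your proposal is correct and follows the same argument as the paper: partition $T$ into $k$ arc-disjoint $r$-out arborescences and invoke the acyclicity of each one's undirected version to limit the arcs between $u$ and $v$ to one per arborescence. The paper states this in two sentences; your writeup merely makes explicit the case analysis and the point that a pair of parallel undirected edges already constitutes a cycle, which the paper leaves implicit.
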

\begin{proof}
Since an $r$-out $k$-arborescence is a union of $k$ arc-disjoint $r$-out $1$-arborescences, it suffices to prove the result for $k=1$.
The claim holds for $k=1$ because the undirected version of $T$ is acyclic, by definition.
\end{proof}

\begin{theorem}[\cite{schrijver-book}, Theorem~53.10] \label{min-cost-arb}
In polynomial time, we can obtain an optimal solution to the minimum $c'$-cost $r$-out $k$-arborescence problem on $D$, or conclude that there is no $r$-out $k$-arborescence in $D$.
\end{theorem}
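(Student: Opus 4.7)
The plan is to reduce the minimum $c'$-cost $r$-out $k$-arborescence problem to weighted matroid intersection and then invoke the polynomial-time Edmonds--Cunningham algorithm. On the arc set $A$, I would set up two matroids: the partition matroid $M_1$ that admits at most $k$ arcs from $\din_D(\{v\})$ for each $v \in W \setminus \{r\}$ and no arcs from $\din_D(\{r\})$; and the matroid $M_2$ obtained as the matroid union of $k$ copies of the graphic matroid of the underlying undirected multigraph of $D$, so that independent sets in $M_2$ are precisely the arc-sets whose underlying edge-multisets decompose into $k$ forests. Both matroids admit polynomial-time independence oracles (the second via Edmonds' matroid partition algorithm).

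The main claim is that an arc-set $T \subseteq A$ with $|T| = k(|W|-1)$ is an $r$-out $k$-arborescence if and only if $T$ is a common independent set of $M_1$ and $M_2$. The forward direction is immediate: decomposing $T$ into its $k$ constituent $r$-out arborescences certifies $M_2$-independence (each arborescence is a spanning tree when undirected), and the arborescences jointly contribute indegree exactly $k$ at each non-root vertex. For the reverse direction, I would invoke Theorem~\ref{thm:k-arb-char} on the subdigraph $(W,T)$. The cardinality assumption together with $T \in M_1$ forces every non-root vertex to have indegree exactly $k$ in $T$, so for any $\emptyset \neq S \subseteq W \setminus \{r\}$ one has $|\din_T(S)| = k|S| - |E_T[S]|$, where $E_T[S]$ denotes the arcs of $T$ with both endpoints in $S$. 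The Nash--Williams forest bound, applicable because the restriction of $T$ to $S$ is still $M_2$-independent, yields $|E_T[S]| \leq k(|S|-1)$ and hence $|\din_T(S)| \geq k$. Theorem~\ref{thm:k-arb-char} then produces an $r$-out $k$-arborescence inside $(W,T)$; since any such arborescence has exactly $k(|W|-1)$ arcs, it must coincide with $T$.

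With the correspondence in hand, I would run weighted matroid intersection on $(M_1, M_2)$ with cost vector $c'$ and target cardinality $k(|W|-1)$ to recover either a minimum $c'$-cost $r$-out $k$-arborescence in polynomial time, or a certificate that none exists (which, by Theorem~\ref{thm:k-arb-char}, means $D$ itself contains no such arborescence). The main obstacle I anticipate is the reverse direction above, where purely matroid-theoretic information must be translated into the existence of an arborescence packing; the bridge is to marry the Nash--Williams forest-decomposition bound with Theorem~\ref{thm:k-arb-char}'s cut characterization. All remaining steps are either definitional or invocations of classical polynomial-time matroid algorithms.
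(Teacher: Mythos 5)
Your proposal is correct, and it is essentially the proof behind the cited result: the paper does not prove this theorem but imports it from Schrijver (Theorem~53.10), whose argument is exactly your reduction to weighted matroid intersection between the indegree partition matroid and the union of $k$ graphic matroids, with the cardinality-$k(|W|-1)$ common independent sets shown to coincide with $r$-out $k$-arborescences via the cut characterization of Theorem~\ref{thm:k-arb-char}. Your handling of the reverse direction (forcing indegree exactly $k$ at non-roots, then combining the Nash--Williams forest bound with the cut condition) is sound and matches the standard treatment.
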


The following lemma shows how a $k$-FGC solution $F$ can be used to obtain an $r$-out $(k+1)$-arborescence (in an appropriate digraph) of cost at most $(k+1)c(F)$.

\begin{lemma} \label{fgc-to-arb}
Let $F$ be a $k$-FGC solution.
Consider the digraph $D = (V,A)$ where the arc-set $A$ is defined as follows: for each unsafe edge $e \in F \cap \unsafe$, we include a bidirected pair of arcs arising from $e$, and for each safe edge $e \in F \cap \safe$, we include $k+1$ bidirected pairs arising from $e$.
Consider the natural extension of the cost vector $c$ to $D$ where the cost of an arc $(u,v) \in A$ is equal to the cost of the edge that gives rise to it.
Then, there is an $r$-out $(k+1)$-arborescence in $D$ with cost at most $(k+1)c(F)$.
\end{lemma}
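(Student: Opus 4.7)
The plan is to invoke Edmonds' theorem (\Cref{thm:k-arb-char}) at parameter $k+1$ on the digraph $D$ to obtain an $r$-out $(k+1)$-arborescence, and then bound the cost of any such arborescence using \Cref{lim-freq}.

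For existence, I would verify the hypothesis of \Cref{thm:k-arb-char}: that $|\din_D(S)| \geq k+1$ for every nonempty $S \subseteq V \setminus \{r\}$. This should follow almost immediately from \Cref{cut-prop} applied to $F$ and $S$. Indeed, $\del_F(S)$ either contains a safe edge $e$ -- which by the construction of $D$ contributes $k+1$ arcs into $S$, one from each of the $k+1$ bidirected pairs arising from $e$ -- or it contains $k+1$ unsafe edges, each contributing at least one arc into $S$. Either way the incoming cut satisfies $|\din_D(S)| \geq k+1$, so \Cref{thm:k-arb-char} yields an $r$-out $(k+1)$-arborescence $T$ in $D$.

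For the cost bound, I would charge the cost of $T$ pair-by-pair. For each unordered pair of vertices $\{u,v\}$, let $F_{uv} \subseteq F$ denote the edges of $F$ between $u$ and $v$ (parallel edges are allowed) and, for $e \in F_{uv}$, let $a_e$ be the number of arcs of $T$ arising from $e$. The key step is to apply \Cref{lim-freq} to the $r$-out $(k+1)$-arborescence $T$: the total number of arcs of $T$ having $u,v$ as endpoints is at most $k+1$, so $\sum_{e \in F_{uv}} a_e \leq k+1$ and in particular $a_e \leq k+1$ for every such $e$. Hence the contribution of the pair $\{u,v\}$ to $c(T)$ is at most $(k+1)\sum_{e \in F_{uv}} c_e$, and summing over all pairs gives $c(T) \leq (k+1)\, c(F)$, as desired.

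The main (and really only) subtlety is to resist the temptation to bound $a_e$ by the number of arcs that $e$ generates in $D$, which would charge $2(k+1)\, c_e$ to each safe edge and fail to deliver the $(k+1)$-factor overall. Grouping arcs by their underlying vertex pair and invoking \Cref{lim-freq} -- equivalently, the observation that each of the $k+1$ constituent $r$-out arborescences contributes at most one arc between any two vertices -- is what drives the $(k+1)$ factor and handles parallel edges uniformly.
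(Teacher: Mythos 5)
Your proposal is correct and takes essentially the same route as the paper: existence of the $r$-out $(k+1)$-arborescence via \Cref{thm:k-arb-char} combined with \Cref{cut-prop}, and the cost bound via \Cref{lim-freq}. The only difference is bookkeeping in the cost step --- the paper bounds arcs per edge (at most $2$ from an unsafe edge's bidirected pair, at most $k+1$ from a safe edge's pairs, then uses $2 \leq k+1$), while you group arcs by underlying vertex pair and bound each edge's multiplicity by $k+1$ uniformly; both follow immediately from the same claim.
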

\begin{proof}
Let $(V,T)$ be a minimum-cost $r$-out $(k+1)$-arborescence in $D$.
First, we argue that $T$ is well-defined. 
By Theorem~\ref{thm:k-arb-char}, it suffices to show that for any nonempty $S \subseteq V \setminus \{r\}$, we have $|\din_D(S)| \geq k+1$.
Fix some nonempty $S \subseteq V \setminus \{r\}$.
By feasibility of $F$, $\del_F(S)$ contains a safe edge or $k+1$ unsafe edges (see Proposition~\ref{cut-prop}).
If $\del_F(S)$ contains a safe edge $e = uv$ with $v \in S$, then by our choice of $A$, $\din_D(S)$ contains $k+1$ $(u,v)$-arcs.
Otherwise, $\del_F(S)$ contains $k+1$ unsafe edges, and for each such unsafe edge $uv$ with $v \in S$, $\din_D(S)$ contains the arc $(u,v)$.
Since $|\din_D(S)| \geq k+1$ in both cases, $T$ is well-defined. 

Finally, we use Claim~\ref{lim-freq} to show that $T$ satisfies the required cost-bound.
For each unsafe edge $e \in F$, $T$ contains at most $2$ arcs from the bidirected pair arising from $e$, and for each safe edge $e \in F$, $T$ contains at most $k+1$ arcs from the (disjoint) union of $k+1$ bidirected pairs arising from $e$.
Thus, $c(T) \leq 2c(F \cap \unsafe) + (k+1) c(F \cap \safe) \leq (k+1) c(F)$.
\end{proof}

%As suggested by the above lemma, we can obtain a minimum-cost $r$-out $(k+1)$-arborescence of cost at most $(k+1)\opt$.

Lemma~\ref{fgc-to-arb} naturally suggests a strategy for Theorem~\ref{kfgc-apx} via minimum-cost $(k+1)$-arborescences.

\begin{proofof}{Theorem~\ref{kfgc-apx}}
Fix some vertex $r \in V$ as the root vertex.
Consider the digraph $D = (V,A)$ obtained from our FGC instance as follows: for each unsafe edge $e \in \unsafe$, we include a bidirected pair arising from $e$, and for each safe edge $e \in \safe$, we include $k+1$ bidirected pairs arising from $e$.
For each edge $e \in E$, let $R(e)$ denote the multi-set of all arcs in $D$ that arise from $e$.
For any edge $e = uv \in E$ and arc $(u,v) \in R(e)$, we define $c_{(u,v)} := c_e$.
Let $(V,T)$ denote a minimum $c$-cost $r$-out $(k+1)$-arborescence in $D$.
By Lemma~\ref{fgc-to-arb}, $c(T) \leq (k+1) \opt$, where $\opt$ denotes the optimal value for the given instance of $k$-FGC.

We finish the proof by arguing that $T$ induces a $k$-FGC solution $F$ with cost at most $c(T)$.
Let $F := \{ e \in E : R(e) \cap T \neq \emptyset \}$.
By definition of $F$ and our choice of arc-costs in $D$, we have $c(F) \leq c(T)$.
It remains to show that $F$ is feasible for $k$-FGC.
Consider a nonempty set $S \subseteq V \setminus \{r\}$.
Since $T$ is an $r$-out $(k+1)$-arborescence, Theorem~\ref{thm:k-arb-char} gives $|\din_T(S)| \geq k+1$.
If $\din_T(S)$ contains a safe arc (i.e., an arc that arises from a safe edge), then that safe edge belongs to $\del_F(S)$.
Otherwise, $\din_T(S)$ contains some $k+1$ unsafe arcs (that arise from unsafe edges).
Since both orientations of an edge cannot appear in $\din_D(S)$, we get that $|\del_F(S) \cap \unsafe| \geq k+1$.
Thus, $F$ is a feasible solution for the given instance of $k$-FGC, and $c(F)\leq(k+1)\opt$.
\end{proofof}

{\small
% \bibliographystyle{plain}
% \bibliography{references}

}
\end{document}